\newtheorem{thm}{Theorem}
\newtheorem{rem}{Remark}
\newcommand{\Rmnum}[1]{\expandafter\@slowromancap\romannumeral #1@}
\begin{document}
\title{A Graphical Evolutionary Game Approach to Social Learning}
\author{Xuanyu Cao and K. J. Ray Liu, \emph{Fellow, IEEE}}
\maketitle
\begin{abstract}
In this work, we study the social learning problem, in which agents of a networked system collaborate to detect the state of the nature based on their private signals. A novel distributed graphical evolutionary game theoretic learning method is proposed. In the proposed game-theoretic method, agents only need to communicate their binary decisions rather than the real-valued beliefs with their neighbors, which endows the method with low communication complexity. Under mean field approximations, we theoretically analyze the steady state equilibria of the game and show that the evolutionarily stable states (ESSs) coincide with the decisions of the benchmark centralized detector. Numerical experiments are implemented to confirm the effectiveness of the proposed game-theoretic learning method.\footnote{Copyright (c) 2017 IEEE. Personal use of this material is permitted. However, permission to use this material for any other purposes must be obtained from the IEEE by sending a request to pubs-permissions@ieee.org.}
\end{abstract}

\begin{IEEEkeywords}
Evolutionary game theory, social learning, distributed detection, distributed decision making
\end{IEEEkeywords}

\section{Introduction}

In the recent decade, tremendous research efforts have been devoted to the social learning problems, in which agents of networked systems learn from not only their own private signals but also other agents. Applications of such social learning problems are ubiquitous in fields such as state estimation in power systems, distributed detection over sensor networks and behavior analysis of social networks.

The setups of the social learning problems can be sorted into two categories. In the first category, agents arrive at the system sequentially and make one-shot decisions consecutively based on their own private observations and the actions of their predecessors. In \cite{krishnamurthy2011bayesian,krishnamurthy2012quickest}, the sequential detection problem was studied by using partially observable Markov decision processes (POMDP). The impact of the memory size of the agents was investigated in \cite{koplowitz1975necessary,drakopoulos2013learning}. The effect of noisy communications in sequential detection was considered in \cite{zhang2013hypothesis} while the benefits of randomness of decision making were studied in \cite{wang2015social}. Furthermore, a Chinese restaurant game-theoretic analysis of agents' sequential decision making processes was presented in \cite{wang2013sequential}. The problem formulation of this letter is closer to the second category of social learning setups, in which agents are fixed and networked and update their actions iteratively based on their own private signals and neighbor agents' actions or beliefs. In this line of models, the consensus hypothesis testing over networks was investigated in \cite{braca2010asymptotic,wang2013gossip}, while its applications in wireless communications were considered in \cite{mostofi2010binary}. Moreover, a non-Bayesian social learning method with the property of asymptotic learning was proposed in \cite{jadbabaie2012non}. Overviews on topics in social learning, distributed detection/estimation over networks were presented in \cite{djuric2012distributed,dimakis2010gossip,olfati2007consensus}. 
 
As the social learning problems involve interactions (learning and decision making) between multiple agents, game theory emerges as an appropriate mathematical tool to study them \cite{tadelis2013game}. In \cite{eksin2014bayesian,eksin2013learning}, a Bayesian quadratic network game filter was proposed for rational agents to learn the state of the world in a cooperative manner. A Bayesian dynamic game model of social learning was investigated and the conditions of asymptotic learning were presented in \cite{acemoglu2011bayesian}. Additionally, the network users' decision making problems were studied with a Bayesian game formulation in \cite{krishnamurthy2013social,krishnamurthy2014tutorial}.

In this work, inspired by the recent success of evolutionary game theory in diverse fields \cite{jiang2013distributed,jiang2014graphical,jiang2014evolutionary,cao2016evolutionary,nowak2004evolutionary,ohtsuki2007breaking}, we propose a graphical evolutionary game-theoretic method for social learning. In the proposed method, based on a death-birth decision update rule, agents only need to communicate their binary decisions instead of the real-valued beliefs with their neighbors, which endows the method with low communication complexity. By invoking mean field approximations, we analyze the steady state equilibria of the game and show that the evolutionarily stable states (ESSs) \cite{weibull1997evolutionary} coincide with the decisions of the benchmark centralized detector. Lastly, we present numerical results to confirm the effectiveness of the proposed game-theoretic learning method.

\section{Problem Formulation}
Consider a network of $N$ agents or nodes (the two terms are used interchangeably in the following). Assume for simplicity that the network is $k$-regular, i.e., the degree (number of neighbors) of each agent is $k$. In practice, many networks are $k$-regular graphs. For example, many sensor networks are grid networks over 2-dimensional plane and are thus 4-regular graphs \cite{karamchandani2011time}; many cellular communication networks are comprised of hexagon cells (each hexagon cell corresponds to the service area of one base station) and are hence 6-regular graphs. In the social learning problem, there is an unknown state of the nature $\theta\in\{0,1\}$ to be detected by all the nodes in a collaborative manner based on their individual private signals or measurements. Suppose the prior distribution of $\theta$ is $\Pr(\theta=0)=\Pr(\theta=1)=0.5$. Agents are sorted into $I$ categories depending on the qualities of their private signals, i.e., the usefulness of the private signals in detecting the unknown state $\theta$. Suppose agent $n$ has some private signal $s_n$ and its type is $i$. Then, its private belief is $p_i=\Pr\{\theta=0|s_n\}$. Clearly, if $p_i$ is close to 0 or 1, then the signals of type-$i$ agents are useful for detecting $\theta$. Oppositely, if $p_i$ is close to $0.5$, then the signals of type-$i$ agents are not very useful.

\subsection{The Centralized Detector}

In this subsection, a centralized detector, i.e., a detector utilizing the signals of all agents in a centralized manner, is derived as a performance benchmark. Assume that, given the true state $\theta$, the signals $s_1,...,s_N$ (henceforth $s_{1:N}$ for shorthand) are conditionally independent, i.e., $p(s_{1:N}|\theta)=\Pi_{n=1}^Np(s_n|\theta)$. With the signals $s_{1:N}$ of all $N$ nodes, a centralized processor can form the posterior distribution $\Pr(\theta=0|s_{1:N})$ according to the Bayesian rule as follows:
\begin{align}
&\Pr(\theta=0|s_{1:N})\\
&=\frac{p(s_{1:N}|\theta=0)\Pr(\theta=0)}{p(s_{1:N}|\theta=0)\Pr(\theta=0)+p(s_{1:N}|\theta=1)\Pr(\theta=1)}\\
&=\frac{\Pi_{n=1}^Np(s_n|\theta=0)}{\Pi_{n=1}^Np(s_n|\theta=0)+\Pi_{n=1}^Np(s_n|\theta=1)}\\
&=\frac{1}{1+\Pi_{i=1}^I\left(\frac{1-p_i}{p_i}\right)^{Nq_i}},
\end{align}
where we denote the proportion of type-$i$ agents as $q_i$. With a threshold of $0.5$ for the posterior distribution $\Pr(\theta=0|s_{1:N})$, the decision rule of the centralized detector $\hat{\theta}_c$ is given by:
\begin{align}\label{centralized}
\sum_{i=1}^Iq_i\log\left(\frac{1-p_i}{p_i}\right)\underset{\hat{\theta}_c=0}{\overset{\hat{\theta}_c=1}{\gtrless}}0.
\end{align}

\subsection{A Graphical Evolutionary Game Framework}
The centralized detector has several drawbacks such as large communication overhead and vulnerability to link failures which make it infeasible in many applications. Therefore, we are motivated to find another detection algorithm with the following favorable properties.
\begin{itemize}
\item \texttt{P1} The detection algorithm is distributed, i.e., each agent only communicates with its neighbors and no centralized entity is needed.
\item \texttt{P2} Agents only interchange their current binary decisions on the state $\theta$ instead of their real-valued beliefs (posterior distributions) on $\theta$. This reduces the communication complexity significantly.
\item \texttt{P3} The detection algorithm produces the same result as the centralized detector \eqref{centralized} does, possibly asymptotically if the algorithm is iterative.
\end{itemize}
In this subsection, we present a graphical evolutionary game social learning approach, which satisfies the aforementioned three properties. Suppose each agent $n$ has a decision $d_n\in\{0,1\}$ on the state $\theta$ and the decision $d_n$ shall be updated iteratively in the game. When a type-$i$ agent $n$ interacts with one of its neighbors, agent $m$, the utility of agent $n$ is summarized in the following table for different combinations of actions of the two interacting parties $n$ and $m$:
\begin{center}
  \begin{tabular}{ c | c | c }

    & $d_m=0$ & $d_m=1$ \\ \hline
    $d_n=0$ & $\log(1-p_i)+u$ & $-\log(1-p_i)$ \\ \hline
    $d_n=1$ & $-\log p_i$ & $-\log p_i+u$ \\

  \end{tabular}
\end{center}
Here $u\geq0$ is some non-negative constant used to capture the fact that agents tend to imitate their neighbors (or friends in social networks) and reach consensus. Additionally, agent $n$ also tends to adhere to its own private belief $p_i$. As such, we reward or penalize the utility of agent $n$ for actions conforming to or deviating from its belief $p_i$, respectively. The usage of logarithmic terms in the utility is inspired by the centralized detector \eqref{centralized}. For an agent with total utility $U$ through interactions with her neighbors, we further define her fitness $\pi$ as a convex combination of $U$ and 1: $\pi=1-\alpha+\alpha U$, where $\alpha>0$ is some small positive constant called the selection strength in evolutionary game theory. The bigger the selection strength $\alpha$, the more heavily the fitness $\pi$ depends on the utility $U$ and the bigger the advantage of agents with large utility. For a type-$i$ agent with $k_0$ neighbors making decision $0$, if he makes decision $d=0$, then his fitness is:
\begin{align}
\begin{split}\label{pi0}
\pi_0(i,k_0)=&1-\alpha+\alpha[k_0(-\log(1-p_i)+u)-(k-k_0)\log(1-p_i)].
\end{split}
\end{align}
If he makes decision $d=1$, then his fitness is:
\begin{align}\label{pi1}
\pi_1(i,k_0)=1-\alpha+\alpha[-k_0\log p_i+(k-k_0)(-\log p_i+u)].
\end{align}

Based on fitness, agents can update their decisions according to some strategy update rule. In the literature of graphical evolutionary game theory \cite{ohtsuki2007breaking,ohtsuki2006replicator,jiang2014graphical,jiang2014evolutionary,cao2016evolutionary}, there are mainly three strategy update rules: the death-birth process, the birth-death process and the imitation process. In this letter, we will focus on the death-birth update rule and other rules can be similarly analyzed. In the death-birth update rule, at each time slot, one agent is selected to abandon her decision uniformly randomly (death process) and the chosen agent update her decision to be one of her neighbors' decisions with probability proportional to their fitness (birth process). This decision update process continues repeatedly across time. In this work, our goal is to study the agents' steady state behaviors in this update process.

The proportion of adoption of decision $0$ among type-$i$ agents is denoted as $x_i$ while the proportion of adoption of decision $0$ among all agents is denoted as $x$. We call $x_i$ the population dynamics of type-$i$ agents and $x$ the population dynamics of the entire network (or simply population dynamics). Obviously, we have $x=\sum_{i=1}^Iq_ix_i$. Our goal in this letter is to study the steady state equilibrium of the population dynamics $x$ and show that this equilibrium coincides with the centralized detector \eqref{centralized}.

We note that the gossip method proposed in \cite{wang2013gossip} tackles a similar social learning problem and also possesses properties \texttt{P1}, \texttt{P2} and \texttt{P3}. In this paper, we take an alternative approach based on evolutionary game theory as opposed to the gossip based method in \cite{wang2013gossip}. The proposed game-theoretic social learning method takes agents' rational learning and decision-making behaviors (such as learning from neighbors with high fitness) into consideration and is thus more amenable to practical implementations in systems with intellegient or strategic agents, e.g., social networks.

\section{Algorithm Development and Equilibrium Analysis}

In this section, we develop the detailed algorithm of the game-theoretic social learning method and analyze the corresponding steady state equilibrium, i.e., the evolutionarily stable state (ESS) \cite{weibull1997evolutionary}, of the population dynamics $x$. Suppose, at a time instant, a type-$i$ agent with decision $0$ is chosen to abandon her decision. According to the death-birth update rule, this agent should update her decision to be one of her neighbors' decisions with probability proportional to fitness. However, as we only allow the agents to communicate their decisions $d$ rather than their private beliefs $p$ (property \texttt{P2}), the chosen agent is unaware of her neighbors' fitness, which depend on their private beliefs. As such, the chosen agent will update her decision as if all of her neighbors' types are $i$, i.e., their beliefs are $p_i$, and only take the neighbors' decisions into consideration. Thus, the probability that the chosen agent will change her decision from 0 to 1 is given by:
\begin{align}
\Pr_{0\rightarrow1}(i,k_0)=\frac{(k-k_0)\pi_1(i,k_0)}{k_0\pi_0(i,k_0)+(k-k_0)\pi_1(i,k_0)}.
\end{align}
Exploiting the expressions of fitness in \eqref{pi0} and \eqref{pi1} and making use of the first order approximation $\frac{1+a\alpha}{1+b\alpha}\approx 1+(a-b)\alpha$ for small $\alpha$, we compute the transition probability $\Pr_{0\rightarrow1}(i,k_0)$ in \eqref{trans}.
\begin{footnotesize}
\begin{align}
&\Pr_{0\rightarrow 1}(i,k_0)\nonumber\\
&=\frac{k-k_0}{k}\frac{1+\alpha[-k_0\log p_i+(k-k_0)(-\log p_i+u)-1]}{1+\alpha\left\{\frac{k_0}{k}[k_0(-\log(1-p_i)+u)-(k-k_0)\log(1-p_i)-1]+\left(1-\frac{k_0}{k}\right)[-k_0\log p_i+(k-k_0)(-\log p_i+u)-1]\right\}}\nonumber\\
%&\approx\frac{k-k_0}{k}\left\{1+\alpha\left(\frac{k_0}{k}[-k_0\log p_i+(k-k_0)(-\log p_i+u)-1]-\frac{k_0}{k}[k_0(-\log(1-p_i)+u)-(k-k_0)\log(1-p_i)-1]\right)\right\}\nonumber\\
&\approx\frac{k-k_0}{k}+\alpha(k-k_0)\left[\left(\log\frac{1-p_i}{p_i}+u\right)\frac{k_0}{k}-2u\frac{k_0^2}{k^2}\right]\label{trans}
\end{align}
\end{footnotesize}

Note that $k_0$ is a binomially distributed random variable with probability mass function (PMF) $
\beta(k,k_0)=
\left(
\begin{aligned}
&k\\
&k_0
\end{aligned}
\right)x^{k_0}(1-x)^{k-k_0}.
$
Using the moments of binomial distribution, we obtain $\mathbb{E}[k_0]=kx$, $\mathbb{E}[k_0^2]=(k^2-k)x^2+kx$, $\mathbb{E}[k_0^3]=k(k-1)(k-2)x^3+3k(k-1)x^2+kx$. Thus, we can compute the expected transition probability averaged over $k_0$:
\begin{align}\label{3}
&\mathbb{E}_{k_0}\left[\Pr_{0\rightarrow1}(i,k_0)\right]\nonumber\\
&=1-x+\alpha\left(\log\frac{1-p_i}{p_i}+u\right)[-(k-1)x^2+(k-1)x]\nonumber\\
&~~~-2u\alpha[(-k+3-2k^{-1})x^3+(k-4+3k^{-1})x^2\nonumber\\
&~~~+(1-k^{-1})x]
\end{align}
Noticing that the probability of choosing a type-$i$ agent with decision $0$ to abandon her decision is $q_ix_i$, we write the PMF of the increment of $x_i$, denoted as $\delta x_i$, in the following:
\begin{align}\label{1}
\Pr\left(\delta x_i=-\frac{1}{Nq_i}\right)=q_ix_i\mathbb{E}\left[\Pr_{0\rightarrow1}(i,k_0)\right].
\end{align}
Similarly, by considering the scenario where a type-$i$ agent with decision $1$ is selected to abandon her decision, we get:
\begin{align}\label{2}
\Pr\left(\delta x_i=\frac{1}{Nq_i}\right)&=q_i(1-x_i)\mathbb{E}\left[\Pr_{1\rightarrow0}(i,k_0)\right]\\
&=q_i(1-x_i)\left(1-\mathbb{E}\left[\Pr_{0\rightarrow1}(i,k_0)\right]\right).
\end{align}
We approximate the discrete time decision update system with a continuous time version, as per convention in the analysis of graphical evolutionary game \cite{ohtsuki2007breaking,ohtsuki2006replicator,jiang2014graphical,jiang2014evolutionary,cao2016evolutionary}. Thus, utilizing \eqref{3}, \eqref{1} and \eqref{2}, we derive the evolutionary dynamics of $x_i$ as follows:
\begin{align}
&\dot{x}_i=\frac{1}{Nq_i}\Pr\left(\delta x_i=\frac{1}{Nq_i}\right)-\frac{1}{Nq_i}\Pr\left(\delta x_i=-\frac{1}{Nq_i}\right)\nonumber\\
&=\frac{1}{N}\left(1-x_i-\mathbb{E}\left[\Pr_{0\rightarrow1}(i,k_0)\right]\right)\nonumber\\
&=\frac{x}{N}-\frac{x_i}{N}+\frac{\alpha}{N}x(x-1)\bigg\{2u\big[\left(-k+3-2k^{-1}\right)x\nonumber\\
&~~~-1+k^{-1}\big]+\left(\log\frac{1-p_i}{p_i}+u\right)(k-1)\bigg\}\label{x_i_dynamics}
\end{align}
Taking a weighted average over all types, we get the evolutionary dynamics of the population dynamics $x$ as:
\begin{align}\label{x_dynamics}
\dot{x}\nonumber&=\sum_{i=1}^Iq_i\dot{x}_i\\
&=\frac{\alpha}{N}x(x-1)\big\{2u\left[\left(-k+3-2k^{-1}\right)x-1+k^{-1}\right]\nonumber\\
&~~~+(\lambda+u)(k-1)\big\},
\end{align}
where $\lambda\triangleq\sum_{i=1}^Iq_i\log\frac{1-p_i}{p_i}$. Note that $\lambda$ is just the discriminant used in the centralized detector \eqref{centralized}. Now, we are ready to present the main theorem of this letter regarding the ESS of the social learning game.

\begin{thm}
(i) Suppose the degree $k\geq2$. Then, the set of evolutionarily stable states (ESSs) $\mathcal{X}^*$ of the social learning game is:
\begin{align}\nonumber
\mathcal{X}^*=\begin{cases}
\{0\},~~\text{if}~~~\lambda>u-2k^{-1}u,\\
\{1\},~~\text{if}~~~\lambda<-u+2k^{-1}u,\\
\{0,1\},~~\text{if}~~~-u+2k^{-1}u<\lambda<u-2k^{-1}u.
\end{cases}
\end{align}

(ii) If we further assume that the initial value of the population dynamics $x$ is $x(0)=0.5$, which can be achieved by a random guess by all agents, then the ESS $x^*$ that the population dynamics $x$ converges to is:
\begin{align}\label{game_detector}
x^*=\begin{cases}
0,~~\text{if}~~~\lambda>0,\\
1,~~\text{if}~~~\lambda<0.
\end{cases}
\end{align}
\end{thm}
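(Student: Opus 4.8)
The plan is to read \eqref{x_dynamics} as a scalar autonomous ODE $\dot x=\frac{\alpha}{N}x(x-1)g(x)$, where $g(x)\triangleq 2u[(-k+3-2k^{-1})x-1+k^{-1}]+(\lambda+u)(k-1)$ is \emph{affine} in $x$, and to identify the ESSs with the locally asymptotically stable rest points of this dynamics, which is the standard characterization in the graphical evolutionary game literature cited above. Since $\frac{\alpha}{N}>0$ and the factor $x(x-1)$ vanishes only at the endpoints, the rest points are $x=0$, $x=1$, and --- when it lies in $[0,1]$ --- the unique root $x^{\dagger}$ of the affine factor $g$. For part (i) I would linearize at each rest point: differentiating $\dot x$ and using $g(0)=0,g(1)=0$ only where relevant gives $\frac{d\dot x}{dx}\big|_{x=0}=-\frac{\alpha}{N}g(0)$ and $\frac{d\dot x}{dx}\big|_{x=1}=\frac{\alpha}{N}g(1)$, so that $x=0$ is stable iff $g(0)>0$ and $x=1$ is stable iff $g(1)<0$.

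The key algebraic simplification is the factorization $-k+3-2k^{-1}=-\frac{(k-1)(k-2)}{k}$, which upon evaluating $g$ at the endpoints yields $g(0)=(k-1)\big(\lambda+\tfrac{u(k-2)}{k}\big)$ and $g(1)=(k-1)\big(\lambda-\tfrac{u(k-2)}{k}\big)$. Because $k\ge 2$ gives $k-1>0$, the two stability conditions become $\lambda>-u+2k^{-1}u$ (for $x=0$) and $\lambda<u-2k^{-1}u$ (for $x=1$). Since $u\ge 0$ and $k\ge 2$ force $-u+2k^{-1}u\le u-2k^{-1}u$, comparing $\lambda$ against these two thresholds partitions the line into exactly the three regimes of the theorem: above the larger threshold only $x=0$ is stable, below the smaller only $x=1$ is stable, and strictly between both endpoints are stable, giving $\mathcal{X}^{*}=\{0,1\}$. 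In that middle regime $x^{\dagger}$ falls inside $(0,1)$; I would show it is a repeller --- two sinks on a scalar line force the intermediate rest point to be a source, and this is confirmed directly since at $x^{\dagger}$ one has $\frac{d\dot x}{dx}=\frac{\alpha}{N}x^{\dagger}(x^{\dagger}-1)g'(x^{\dagger})$ with $x^{\dagger}(x^{\dagger}-1)<0$ and $g'<0$ --- so $x^{\dagger}$ is excluded from $\mathcal{X}^{*}$.

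For part (ii) I would trace the trajectory from $x(0)=0.5$ within the same phase picture. In the two outer regimes there is a single stable endpoint whose basin is all of $(0,1)$, so $x$ converges to it; since $\lambda>u-2k^{-1}u\ge 0$ selects $x^{*}=0$ and $\lambda<-u+2k^{-1}u\le 0$ selects $x^{*}=1$, these already agree with \eqref{game_detector}. In the middle regime $x^{\dagger}$ is the basin boundary, and solving $g(x^{\dagger})=0$ gives $x^{\dagger}=\frac{\lambda+A}{2A}$ with $A\triangleq\frac{u(k-2)}{k}$; the crucial observation is that $x^{\dagger}\gtrless 0.5$ holds exactly as $\lambda\gtrless 0$, so from $x(0)=0.5$ the population flows to $x=0$ when $\lambda>0$ and to $x=1$ when $\lambda<0$. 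Assembling the outer and middle regimes shows the selection threshold is precisely $\lambda=0$, establishing \eqref{game_detector} and hence the coincidence with the centralized detector \eqref{centralized}.

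I expect the main obstacle to be the bookkeeping around the interior rest point rather than the endpoint linearizations: one must confirm that $x^{\dagger}$ enters $(0,1)$ exactly in the two-stable-endpoint regime, that it is always a repeller there (so it never contributes to $\mathcal{X}^{*}$), and that its position relative to $0.5$ flips precisely at $\lambda=0$ so that the initial condition $x(0)=0.5$ selects the correct endpoint. Care is also needed in the degenerate subcase $A=0$ (i.e.\ $k=2$ or $u=0$), where the two thresholds collapse to $0$, the middle regime is empty, and $x^{\dagger}$ no longer lies strictly inside $(0,1)$; there the endpoint stability conditions reduce directly to the sign of $\lambda$ and both parts follow immediately.
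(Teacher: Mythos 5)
Your proposal is correct and follows essentially the same route as the paper: identify the rest points $0$, $1$, and the interior root of the affine factor, test local asymptotic stability via the sign of $\partial\dot{x}/\partial x$ at each (the paper packages this as a $2\times 2$ upper-triangular Jacobian of $(x_i,x)$, which reduces to your scalar linearization), obtain the same two thresholds $\pm(u-2k^{-1}u)$, and settle part (ii) by the same phase-line argument comparing $\tilde{x}=\frac{\lambda}{2u(1-2k^{-1})}+\frac12$ with the initial value $0.5$. Your explicit treatment of the degenerate case $k=2$ or $u=0$, where $\tilde{x}$ is undefined, is a small point of added care that the paper glosses over, but it does not change the argument.
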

\begin{proof}
(i) Letting $\dot{x}=0$ in the population dynamics \eqref{x_dynamics} yields three equilibria $0$, $1$, and $\tilde{x}$, where $\tilde{x}\triangleq\frac{\lambda}{2u\left(1-2k^{-1}\right)}+\frac{1}{2}$.
For an equilibrium point to be an ESS, it needs to be a locally asymptotically stable for the underlying dynamical system. To test the stability of the three equilibria, we form the Jacobian matrix $\mathbf{J}\in\mathbb{R}^{2\times2}$ of the dynamical system $(x_i,x)$ specified in equations \eqref{x_i_dynamics} and \eqref{x_dynamics}:
\begin{align}
\mathbf{J}=\left[
\begin{array}{cc}
\frac{\partial\dot{x}_i}{\partial x_i} & \frac{\partial\dot{x}_i}{\partial x}\\
\frac{\partial\dot{x}}{\partial x_i} & \frac{\partial\dot{x}}{\partial x}
\end{array}
\right].
\end{align}
The entries of $\mathbf{J}$ are computed as follows:
\begin{align}\nonumber
\frac{\partial\dot{x}_i}{\partial x_i}=-\frac{1}{N},~~~\frac{\partial\dot{x}}{\partial x_i}=0,
\end{align}
\begin{align}
&\frac{\partial\dot{x}_i}{\partial x}=\frac{1}{N}+\frac{\alpha}{N}(2x-1)\bigg[-2u\big(\big(k-3+2k^{-1}\big)x+1-k^{-1}\big)\nonumber\\
&+\left(\log\frac{1-p_i}{p_i}+u\right)(k-1)\bigg]+\frac{2u\alpha}{N}x(x-1)\left(-k+3-2k^{-1}\right),\nonumber
\end{align}
\begin{align}
\frac{\partial\dot{x}}{\partial x}&=\frac{\alpha}{N}(2x-1)\big[-2u\big(\left(k-3+2k^{-1}\right)x+1-k^{-1}\big)\nonumber\\
&~~+(u+\lambda)(k-1)\big]+\frac{2u\alpha}{N}x(x-1)(-k+3-2k^{-1})\nonumber
\end{align}
As $\mathbf{J}$ is upper triangular and $\frac{\partial\dot{x}_i}{\partial x_i}$ is negative, the locally asymptotically stability is equivalent to $\frac{\partial\dot{x}}{\partial x}<0$. Therefore, $x=0$ is an ESS iff $\frac{\partial\dot{x}}{\partial x}|_{x=0}<0$, i.e., $\lambda>-u+2k^{-1}u$. Similarly, $x=1$ is an ESS iff $\frac{\partial\dot{x}}{\partial x}|_{x=1}<0$, i.e., $\lambda<u-2k^{-1}u$. $x=\tilde{x}$ is an ESS iff $\frac{\partial\dot{x}}{\partial x}|_{x=\tilde{x}}<0$, i.e., $\tilde{x}<0$ or $\tilde{x}>1$, which contradict to the fact that the population dynamics is within $[0,1]$. So, $\tilde{x}$ can never be an ESS. We thus conclude the first part of the theorem.

(ii) If $\lambda>u-2k^{-1}u$, then the unique ESS is 0 and the population dynamics $x$ will converge to it. Similarly, if $\lambda<-u+2k^{-1}u$, then the unique ESS is 1 and the population dynamics $x$ will converge to it. In these two circumstances, \eqref{game_detector} evidently holds. If $-u+2k^{-1}u<\lambda<u-2k^{-1}u$, then the set of ESSs $\mathcal{X}^*$ contains both 0 and 1 and we need to ascertain which ESS will the population dynamics $x$ converge to. Recall the evolutionary dynamics of $x$ in \eqref{x_dynamics} and we note that if $x>\tilde{x}$, then $\dot{x}>0$ and $x$ is increasing; if $x<\tilde{x}$, then $\dot{x}<0$ and $x$ is decreasing. Recall that the initial value of $x$ is $x(0)=0.5$. If $\lambda>0$, then $\tilde{x}>0.5=x(0)$. So, $x$ is decreasing initially, which means $x$ will become even smaller and $x<\tilde{x}$ still hods. Therefore, $x$ is always decreasing and the ESS it converges to is 0. Analogously, if $\lambda<0$, then the ESS $x$ converges to is 1.
\end{proof}
\begin{rem}
Part (ii) of Theorem 1 establishes that the steady state of the game-theoretic social learning method coincides with the decision of the centralized detector, i.e., the game-theoretic social learning method possesses property \texttt{P3}.
\end{rem}

\section{Numerical Results}

In this section, numerical results are presented to corroborate the proposed game-theoretic social learning approach. We simulate a random regular network with $N=1000$ nodes (agents) and the degree of each node is $k=20$. The game parameters are chosen to be $\alpha=0.05$ and $u=0.5$. All experimental results are averages over 100 independent trials.

We first consider a network of $I=2$ types of agents. The belief of the first type is fixed to be $p_1=0.2$. We consider two scenarios (i) $q_1=q_2=0.5$; (ii) $q_1=0.3,~q_2=0.7$. The relation between the ESS and $p_2$ is reported in Fig. \ref{simu}-(a) for the two scenarios, respectively. , The ESSs are computed as the average proportion of agents with decision 0 over the 100 trials. The decisions of the centralized detector are also plotted as a benchmark. We observe that the ESSs of the game-theoretic learning are close to the decisions of the centralized detector in both scenarios. The gaps between the ESSs of the game-theoretic learning method and the decisions of the centralized detector are consequences of the randomness of the graphical evolutionary game formulation (e.g., the birth process in the death-birth decision update rule is subject to randomness). Note that the theoretical result (e.g., Theorem 1) is based on mean-field approximations, i.e., replacing random variables with their expectations to simplify analysis. Therefore, though Theorem 1 asserts that the steady states of the game-theoretic learning method coincide with the decisions of the centralized detector, there exist some gaps between the two in numerical experiments. Since the game-theoretic learning method is fully distributed and only requires communications of agents' binary decisions instead of their real-valued beliefs, it is still more desirable in many applications, especially those in need of low communication overhead and robustness.

%\begin{figure}
  %\centering
  % Requires \usepackage{graphicx}
  %\includegraphics[scale=.1]{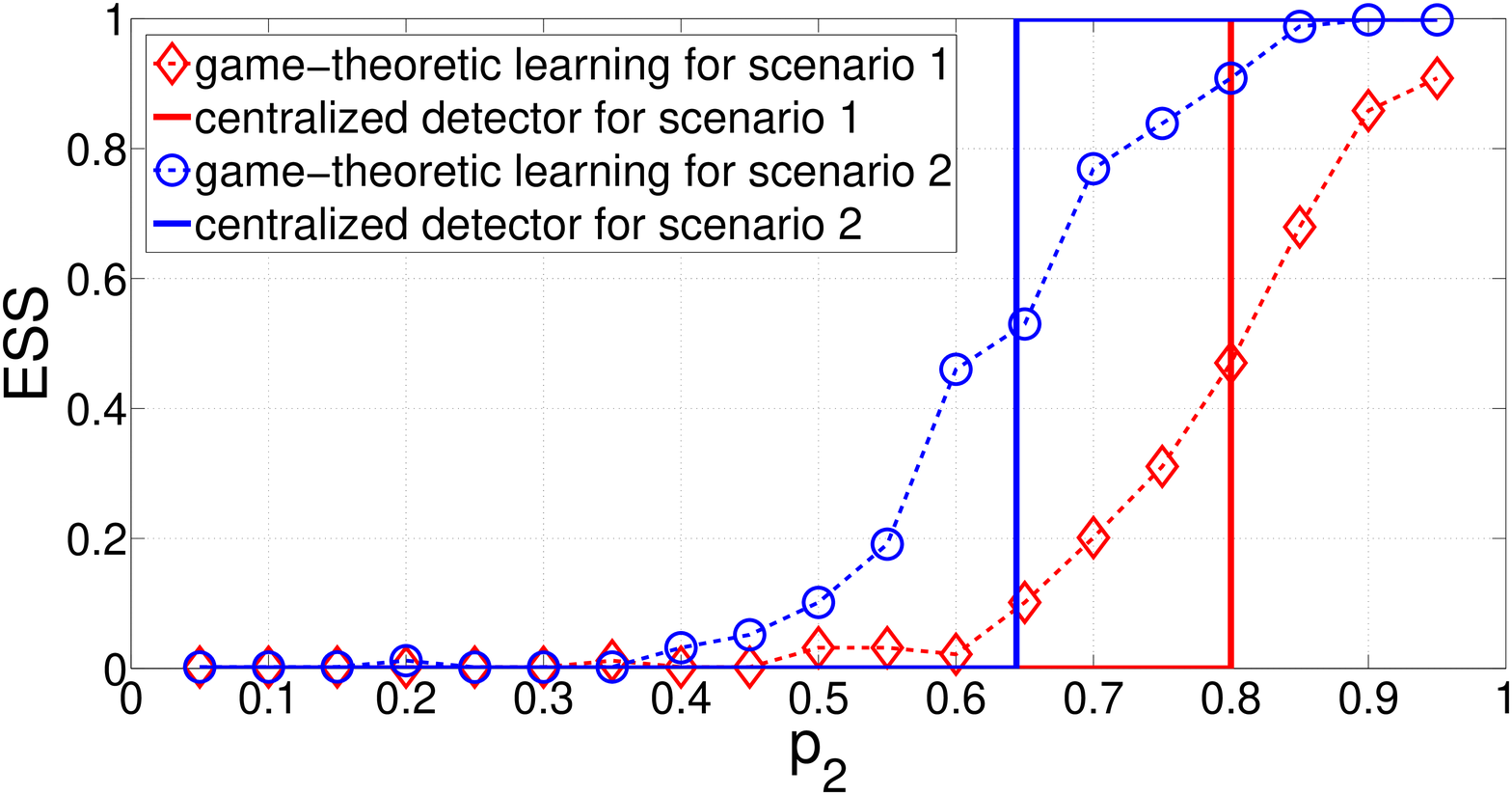}\\
  %\caption{Networks with two types of agents}\label{2type}
%\end{figure}

We further conduct experiments for networks with $I=5$ types of agents. The beliefs of the first four types are set to be $p_1=0.6,~p_2=0.7,~p_3=0.5,~p_4=0.4$. We consider two scenarios: (i) $q_1=q_2=q_3=q_4=q_5=0.2$; (ii) $q_1=0.2,~q_2=0.1,~q_3=0.1,~q_4=0.1,~q_5=0.5$. The relation between the ESSs and $p_5$ is illustrated in Fig. \ref{simu}-(b). The decisions of the centralized detector are also shown as a comparison. Similar to the experiments with 2 types of agents, the ESSs of the game-theoretic learning method can still match the decisions of the centralized detector approximately, which confirms the effectiveness of the proposed game-theoretic social learning method for different numbers of types.

The typical number of iterations (or equivalently, time slots) needed to converge to the ESS is between $5\times10^4$ to $10^5$. Though the iteration number seems huge, the actual convergence time in real networks is not large given the fact that the length of each time slot is very small (the length of time slot is approximately inversely proportional to the number of agents $N$ since one agent is chosen to update her decision in each time slot).

%\begin{figure}
  %\centering
  % Requires \usepackage{graphicx}
  %\includegraphics[scale=.1]{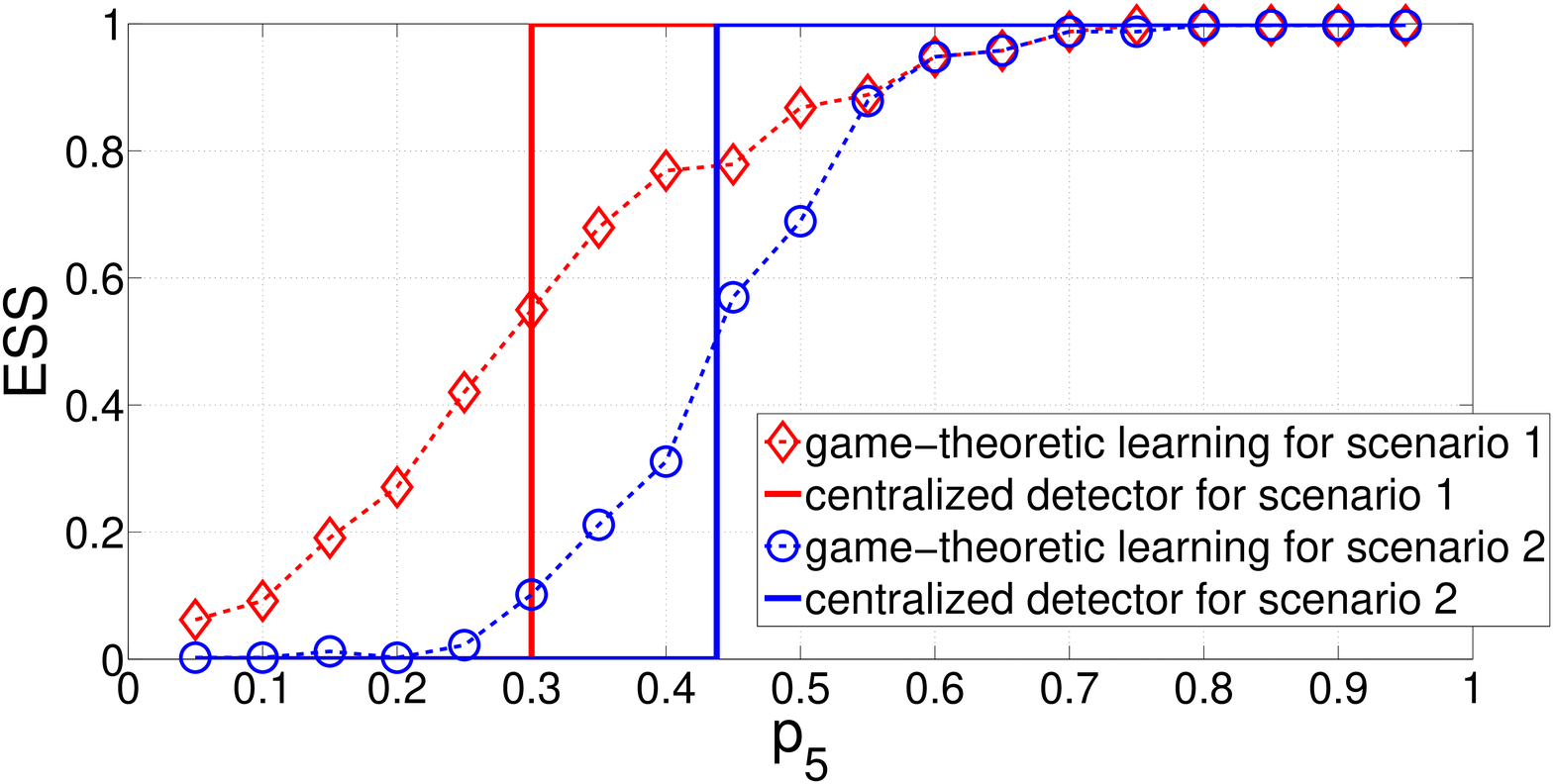}\\
  %\caption{Networks with five types of agents}\label{5type}
%\end{figure}

\begin{figure}
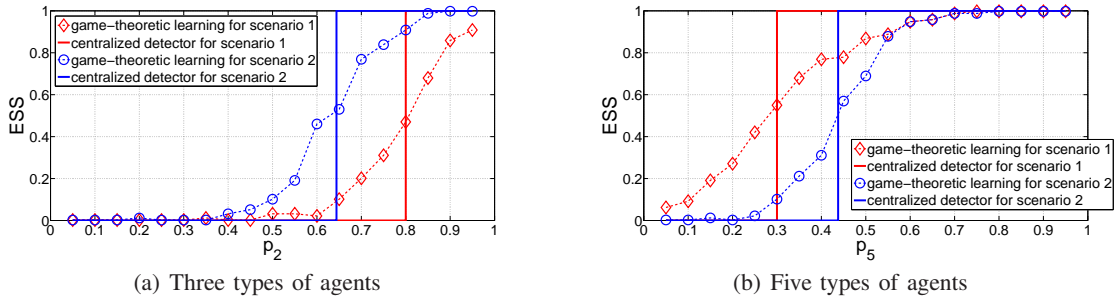

\renewcommand\figurename{\small Fig.}
\centering \vspace*{8pt} \setlength{\baselineskip}{10pt}

\subfigure[Three types of agents]{
\includegraphics[scale = 0.15]{figs/2_type_ESS.eps}}
\subfigure[Five types of agents]{
\includegraphics[scale = 0.15]{figs/5_type_ESS.eps}}
\caption{Performance of the proposed social learning method}
\label{simu}
\end{figure}

\section{Conclusion}
In this letter, a graphical evolutionary game based social learning method is proposed. The method is fully distributed and only requires communications of agents' binary decisions instead of their real-valued beliefs, which endows the proposed method with low communication complexity. Theoretical analysis under mean field approximations indicates that the evolutionarily stable states of the game coincide with the decisions of the centralized detector. Numerical experiments are implemented to validate the performance of the game-theoretic learning method.

\bibliography{mybib}{}

\begin{thebibliography}{10}

\bibitem{krishnamurthy2011bayesian}
V.~Krishnamurthy, ``Bayesian sequential detection with phase-distributed change
  time and nonlinear penalty—a pomdp lattice programming approach,'' {\em
  IEEE Transactions on Information Theory}, vol.~57, no.~10, pp.~7096--7124,
  2011.

\bibitem{krishnamurthy2012quickest}
V.~Krishnamurthy, ``Quickest time detection with social learning: interaction
  of local and global decision makers,'' {\em IEEE Transactions on Information
  Theory}, vol.~58, no.~8, pp.~5563--5587, 2012.

\bibitem{koplowitz1975necessary}
J.~Koplowitz, ``Necessary and sufficient memory size for m-hypothesis
  testing,'' {\em IEEE Transactions on Information Theory}, vol.~21, no.~1,
  pp.~44--46, 1975.

\bibitem{drakopoulos2013learning}
K.~Drakopoulos, A.~Ozdaglar, and J.~N. Tsitsiklis, ``On learning with finite
  memory,'' {\em IEEE Transactions on Information Theory}, vol.~59, no.~10,
  pp.~6859--6872, 2013.

\bibitem{zhang2013hypothesis}
Z.~Zhang, E.~K. Chong, A.~Pezeshki, and W.~Moran, ``Hypothesis testing in
  feedforward networks with broadcast failures,'' {\em IEEE Journal of Selected
  Topics in Signal Processing}, vol.~7, no.~5, pp.~797--810, 2013.

\bibitem{wang2015social}
Y.~Wang and P.~M. Djuri{\'c}, ``Social learning with bayesian agents and random
  decision making,'' {\em IEEE Transactions on Signal Processing}, vol.~63,
  no.~12, pp.~3241--3250, 2015.

\bibitem{wang2013sequential}
C.-Y. Wang, Y.~Chen, and K.~J.~R. Liu, ``Sequential chinese restaurant game,''
  {\em IEEE Transactions on Signal Processing}, vol.~61, no.~3, pp.~571--584,
  2013.

\bibitem{braca2010asymptotic}
P.~Braca, S.~Marano, V.~Matta, and P.~Willett, ``Asymptotic optimality of
  running consensus in testing binary hypotheses,'' {\em IEEE Transactions on
  Signal Processing}, vol.~58, no.~2, pp.~814--825, 2010.

\bibitem{wang2013gossip}
Y.~Wang and P.~M. Djuric, ``A gossip method for optimal consensus on a binary
  state from binary actions,'' {\em IEEE Journal of Selected Topics in Signal
  Processing}, vol.~7, no.~2, pp.~274--283, 2013.

\bibitem{mostofi2010binary}
Y.~Mostofi and M.~Malmirchegini, ``Binary consensus over fading channels,''
  {\em IEEE Transactions on Signal Processing}, vol.~58, no.~12,
  pp.~6340--6354, 2010.

\bibitem{jadbabaie2012non}
A.~Jadbabaie, P.~Molavi, A.~Sandroni, and A.~Tahbaz-Salehi, ``Non-bayesian
  social learning,'' {\em Games and Economic Behavior}, vol.~76, no.~1,
  pp.~210--225, 2012.

\bibitem{djuric2012distributed}
P.~M. Djuri{\'c} and Y.~Wang, ``Distributed bayesian learning in multiagent
  systems: Improving our understanding of its capabilities and limitations,''
  {\em IEEE Signal Processing Magazine}, vol.~29, no.~2, pp.~65--76, 2012.

\bibitem{dimakis2010gossip}
A.~G. Dimakis, S.~Kar, J.~M. Moura, M.~G. Rabbat, and A.~Scaglione, ``Gossip
  algorithms for distributed signal processing,'' {\em Proceedings of the
  IEEE}, vol.~98, no.~11, pp.~1847--1864, 2010.

\bibitem{olfati2007consensus}
R.~Olfati-Saber, J.~A. Fax, and R.~M. Murray, ``Consensus and cooperation in
  networked multi-agent systems,'' {\em Proceedings of the IEEE}, vol.~95,
  no.~1, pp.~215--233, 2007.

\bibitem{tadelis2013game}
S.~Tadelis, {\em Game theory: an introduction}.
\newblock Princeton University Press, 2013.

\bibitem{eksin2014bayesian}
C.~Eksin, P.~Molavi, A.~Ribeiro, and A.~Jadbabaie, ``Bayesian quadratic network
  game filters,'' {\em IEEE Transactions on Signal Processing}, vol.~62, no.~9,
  pp.~2250--2264, 2014.

\bibitem{eksin2013learning}
C.~Eksin, P.~Molavi, A.~Ribeiro, and A.~Jadbabaie, ``Learning in network games
  with incomplete information: Asymptotic analysis and tractable implementation
  of rational behavior,'' {\em IEEE Signal Processing Magazine}, vol.~30,
  no.~3, pp.~30--42, 2013.

\bibitem{acemoglu2011bayesian}
D.~Acemoglu, M.~A. Dahleh, I.~Lobel, and A.~Ozdaglar, ``Bayesian learning in
  social networks,'' {\em The Review of Economic Studies}, vol.~78, no.~4,
  pp.~1201--1236, 2011.

\bibitem{krishnamurthy2013social}
V.~Krishnamurthy and H.~V. Poor, ``Social learning and bayesian games in
  multiagent signal processing: How do local and global decision makers
  interact?,'' {\em IEEE Signal Processing Magazine}, vol.~30, no.~3,
  pp.~43--57, 2013.

\bibitem{krishnamurthy2014tutorial}
V.~Krishnamurthy and H.~V. Poor, ``A tutorial on interactive sensing in social
  networks,'' {\em IEEE Transactions on Computational Social Systems}, vol.~1,
  no.~1, pp.~3--21, 2014.

\bibitem{jiang2013distributed}
C.~Jiang, Y.~Chen, and K.~J.~R. Liu, ``Distributed adaptive networks: A
  graphical evolutionary game-theoretic view,'' {\em IEEE Transactions on
  Signal Processing}, vol.~61, no.~22, pp.~5675--5688, 2013.

\bibitem{jiang2014graphical}
C.~Jiang, Y.~Chen, and K.~J.~R. Liu, ``Graphical evolutionary game for
  information diffusion over social networks,'' {\em IEEE Journal of Selected
  Topics in Signal Processing}, vol.~8, no.~4, pp.~524--536, 2014.

\bibitem{jiang2014evolutionary}
C.~Jiang, Y.~Chen, and K.~J.~R. Liu, ``Evolutionary dynamics of information
  diffusion over social networks,'' {\em IEEE Transactions on Signal
  Processing}, vol.~62, no.~17, pp.~4573--4586, 2014.

\bibitem{cao2016evolutionary}
X.~Cao, Y.~Chen, C.~Jiang, and K.~J.~R. Liu, ``Evolutionary information
  diffusion over heterogeneous social networks,'' {\em IEEE Transactions on
  Signal and Information Processing over Networks}, vol.~2, no.~4,
  pp.~595--610, 2016.

\bibitem{nowak2004evolutionary}
M.~A. Nowak and K.~Sigmund, ``Evolutionary dynamics of biological games,'' {\em
  science}, vol.~303, no.~5659, pp.~793--799, 2004.

\bibitem{ohtsuki2007breaking}
H.~Ohtsuki, M.~A. Nowak, and J.~M. Pacheco, ``Breaking the symmetry between
  interaction and replacement in evolutionary dynamics on graphs,'' {\em
  Physical review letters}, vol.~98, no.~10, p.~108106, 2007.

\bibitem{weibull1997evolutionary}
J.~W. Weibull, {\em Evolutionary game theory}.
\newblock MIT press, 1997.

\bibitem{karamchandani2011time}
N.~Karamchandani, R.~Appuswamy, and M.~Franceschetti, ``Time and energy
  complexity of function computation over networks,'' {\em IEEE Transactions on
  Information Theory}, vol.~57, no.~12, pp.~7671--7684, 2011.

\bibitem{ohtsuki2006replicator}
H.~Ohtsuki and M.~A. Nowak, ``The replicator equation on graphs,'' {\em Journal
  of theoretical biology}, vol.~243, no.~1, pp.~86--97, 2006.

\end{thebibliography}
\bibliographystyle{ieeetr}

\end{document}